\documentclass[orivec,conference]{llncs}
\usepackage[cmex10]{amsmath}
\usepackage{stmaryrd}
\usepackage{amsfonts}
\usepackage{amssymb}
\usepackage{bbold}
\usepackage{wrapfig}
\usepackage{graphicx}
\usepackage{enumerate}
\usepackage{multirow}
\usepackage{xspace}
\usepackage{cancel}
\usepackage{enumitem}
\usepackage[hidelinks]{hyperref}
\usepackage{esvect}
\usepackage{caption}
\usepackage{xcolor,colortbl}

\newcommand{\Eg}{\mbox{$\mi{EG}$-$\mi{PQE}$}\xspace}
\newcommand{\egp}{\mbox{$\mi{EG}$-$\mi{PQE}^+$}\xspace}

\newcommand{\Vp}{\mbox{$\mi{VerPQE}$}\xspace}
\newcommand{\Cr}{\mbox{$\mi{ChkRed}$}\xspace}

\newcommand{\bm}[1]{{\mbox{\boldmath $#1$}}}
\newcommand{\Bm}[1]{{\boldmath $#1$}}

\newcommand{\pl}{\mbox{$\mi{Plg}$}\xspace}

\newcommand{\imp}{\Rightarrow} 

\newcommand{\pnt}[1]{\mbox{$\vv{#1}$}\xspace}

\newcommand{\cof}[2]{\mbox{$#1_{\vec{#2}}$}}

\newcommand{\V}[1]{\mbox{$\mathit{Vars}(#1)$}}
\newcommand{\Va}[1]{\mbox{$\mi{Vars}(\vec{#1})$}}

\newcommand{\s}[1]{\mbox{$\{#1\}$}}

\newcommand{\nGz}[2]{$G_{non-\{z\}}$}

\newcommand{\prr}[1]{\mi{Prev}(\boldsymbol{q})}

\newcommand{\mi}[1]{\mathit{#1}}
\newcommand{\ti}[1]{\textit{#1}}
\newcommand{\tb}[1]{\textbf{#1}}

\newcommand{\ttt}{\>\>\>}

\newcommand{\Tt}{\>\>}

\newcommand{\prob}[2]{\mbox{$\exists{#1} [#2]$}}

\newcommand{\Comment}[1]{}





\begin{document}

\title{Verification Of Partial Quantifier Elimination}


\author{Eugene Goldberg}
\institute{\email{eu.goldberg@gmail.com}}

\maketitle

\begin{abstract}
Quantifier elimination (QE) is an important problem that has numerous
applications. Unfortunately, QE is computationally very hard. Earlier
we introduced a generalization of QE called \ti{partial} QE (or PQE
for short). PQE allows to unquantify a \ti{part} of the formula.  The
appeal of PQE is twofold. First, many important problems can be solved
in terms of PQE. Second, PQE can be drastically faster than QE if only
a small part of the formula gets unquantified.  To make PQE practical,
one needs an algorithm for verifying the solution produced by a PQE
solver. In this paper, we describe a very simple SAT-based verifier
called \Vp and provide some experimental results.
\end{abstract}

\section{Introduction}
Earlier, we introduced a generalization of Quantifier Elimination (QE)
called \ti{partial} QE (or PQE for short)~\cite{hvc-14}.  PQE allows
to unquantify a \ti{part} of the formula. So, QE is just a special
case of PQE where the entire formula gets unquantified.  The appeal of
PQE is twofold. First, it can be much more efficient than QE if only a
small part of the formula gets unquantified. Second, many known
verification problems like SAT, equivalence checking, model checking
and new problems like property generation can be solved in terms of
PQE~\cite{hvc-14,south_korea,fmcad16,mc_no_inv2,eg_pqe_tech}. So, PQE
can be used to design new efficient algorithms. To make PQE practical,
one needs to verify the correctness of the solution provided by a PQE
solver. Such verification is the focus of this paper.

We consider PQE on propositional formulas in conjunctive normal form
(CNF)\footnote{Every formula is a propositional CNF formula unless otherwise
stated. Given a CNF formula $F$ represented as the conjunction of
clauses $C_1 \wedge \dots \wedge C_k$, we will also consider $F$ as
the \ti{set} of clauses \s{C_1,\dots,C_k}.
} with existential quantifiers. PQE is
defined as follows. Let $F(X,Y)$ be a propositional CNF formula where
$X,Y$ are sets of variables. Let $G$ be a subset of clauses of $F$.
Given a formula \prob{X}{F}, find a quantifier-free formula $H(Y)$
such that $\prob{X}{F}\equiv H\wedge\prob{X}{F \setminus G}$.  In
contrast to QE, only the clauses of $G$ are taken out of the scope of
quantifiers here (hence the name partial QE).  We will refer to $H$ as
a \tb{solution} to PQE. As we mentioned above, PQE \ti{generalizes}
QE. The latter is just a special case of PQE where $G = F$ and the
entire formula is unquantified.

To verify the solution $H$ above one needs to check if
$\prob{X}{F}\equiv H\wedge\prob{X}{F \setminus G}$ indeed holds. If
derivation of $H$ is done in some proof system, one can check the
correctness of $H$ by verifying the proof (like it is done for
SAT-solvers). Since, PQE is currently in its infancy and no well
established proof system exists we use a more straightforward
approach.  Namely, we present a very simple SAT-based verification
algorithm called \Vp that does not require any knowledge of how the
solution $H$ is produced. A flaw of \Vp is that, in general, it does
not scale well. Nevertheless, \Vp can be quite useful in two
scenarios. First, \Vp is efficient enough to handle PQE problems formed
from random formulas of up to 70-80 variables. Such examples can can
be employed when debugging a PQE solver. Second, \Vp can efficiently
verify even large PQE problems for a particular class of formulas
described in Subsection~\ref{ssec:perf}.

The paper is structured as follows. Basic definitions are given in
Section~\ref{sec:basic}. Section~\ref{sec:ver_pqe} formally describes
how a solution to PQE can be verified. The verification algorithm
called \Vp is presented in
Section~\ref{sec:ver_alg}. Section~\ref{sec:expers} gives experimental
results. Some background is provided in Section~\ref{sec:bkgr} and
conclusions are made in Section~\ref{sec:concl}.

\vspace{-2pt}
\section{Basic Definitions}
\vspace{-1pt}
\label{sec:basic}

In this section, when we say ``formula'' without mentioning
quantifiers, we mean ``a quantifier-free formula''.

\begin{definition}
\label{def:cnf}
We assume that formulas have only Boolean variables.  A \tb{literal}
of a variable $v$ is either $v$ or its negation.  A \tb{clause} is a
disjunction of literals. A formula $F$ is in conjunctive normal form
(\tb{CNF}) if $F = C_1 \wedge \dots \wedge C_k$ where $C_1,\dots,C_k$
are clauses. We will also view $F$ as a \tb{set of
clauses} \s{C_1,\dots,C_k}. We assume that \tb{every formula is in
CNF} unless otherwise stated.
\end{definition}

%
%
\begin{definition}
  \label{def:vars} Let $F$ be a formula. Then \bm{\V{F}} denotes the
set of variables of $F$ and \bm{\V{\prob{X}{F}}} denotes
$\V{F}\!\setminus\!X$.
\end{definition}

%
%
\begin{definition}
Let $V$ be a set of variables. An \tb{assignment} \pnt{q} to $V$ is a
mapping $V'~\rightarrow \s{0,1}$ where $V' \subseteq V$.  We will
denote the set of variables assigned in \pnt{q}~~as \bm{\Va{q}}. We will
refer to \pnt{q} as a \tb{full assignment} to $V$ if $\Va{q}=V$. We
will denote as \bm{\pnt{q} \subseteq \pnt{r}} the fact that a) $\Va{q}
\subseteq \Va{r}$ and b) every variable of \Va{q} has the same value
in \pnt{q} and \pnt{r}.
\end{definition}

%
%
\begin{definition}
A literal and a clause are said to be \tb{satisfied}
(respectively \tb{falsified}) by an assignment \pnt{q} if they
evaluate to 1 (respectively 0) under \pnt{q}.
\end{definition}

%
%
\begin{definition}
\label{def:cofactor}
Let $C$ be a clause. Let $H$ be a formula that may have quantifiers,
and \pnt{q} be an assignment to
\V{H}.  If $C$ is satisfied by \pnt{q}, then \bm{\cof{C}{q} \equiv
  1}. Otherwise, \bm{\cof{C}{q}} is the clause obtained from $C$ by
removing all literals falsified by \pnt{q}. Denote by \bm{\cof{H}{q}}
the formula obtained from $H$ by removing the clauses satisfied by
\pnt{q} and replacing every clause $C$ unsatisfied by \pnt{q} with
\cof{C}{q}.
\end{definition}

%
%
\begin{definition}
  \label{def:Xcls}
Given a formula \prob{X}{F(X,Y)}, a clause $C$ of $F$ is called
\tb{quantified} if \V{C} $\cap~X~\neq~\emptyset$. 
\end{definition}

%
%
\begin{definition}
\label{def:formula-equiv}
Let $G, H$ be formulas that may have existential quantifiers. We say
that $G, H$ are \tb{equivalent}, written \bm{G \equiv H}, if
$\cof{G}{q} =
\cof{H}{q}$ for all full assignments \pnt{q} to $\V{G} \cup \V{H}$.
\end{definition}

%
%
\begin{definition}
\label{def:red_cls}
Let $F(X,Y)$ be a formula and $G \subseteq F$ and $G \neq
\emptyset$. The clauses of $G$ are said to be \textbf{redundant in} \bm{\prob{X}{F}} if
$\prob{X}{F} \equiv \prob{X}{F \setminus G}$.
\end{definition}

%
%
\begin{definition}
 \label{def:pqe_prob} Given a formula \prob{X}{F(X,Y))} and $G$ where
 $G \subseteq F$, the \tb{Partial Quantifier Elimination} (\tb{PQE})
 problem is to find $H(Y)$ such that\linebreak \Bm{\prob{X}{F}\equiv
 H\wedge\prob{X}{F \setminus G}}.  (So, PQE takes $G$ out of the scope
 of quantifiers.)  The formula $H$ is called a \tb{solution} to
 PQE. The case of PQE where $G = F$ is called \tb{Quantifier
 Elimination} (\tb{QE}).
\end{definition}
\begin{remark}
For the sake of simplicity, we will assume that every clause of formula
$G$ in Definition~\ref{def:pqe_prob} is quantified.
\end{remark}
%
%
\begin{example}
\label{exmp:pqe_exmp}
Consider the formula $F = C_1 \wedge C_2 \wedge C_3 \wedge C_4$ where
$C_1=\overline{x}_3 \vee x_4$, $C_2\!=\!y_1\!\vee\!x_3$,
$C_3=y_1 \vee \overline{x}_4$, $C_4\!=\!y_2\!\vee\!x_4$. Let $Y$
denote \s{y_1,y_2} and $X$ denote \s{x_3,x_4}. Consider the PQE
problem of taking $C_1$ out of \prob{X}{F} i.e. finding $H(Y)$ such
that $\prob{X}{F} \equiv H \wedge \prob{X}{F \setminus \s{C_1}}$. One
can show that $\prob{X}{F} \equiv
y_1 \wedge \prob{X}{F \setminus \s{C_1}}$.  That is, $H\!  =\!y_1$ is
a solution to the PQE problem above.
\end{example}

%
%
\begin{proposition}
\label{prop:sol_impl}
Let $H$ be a solution to the PQE problem of
Definition~\ref{def:pqe_prob}.  That is $\prob{X}{F}\equiv
H\wedge\prob{X}{F \setminus G}$. Then $F \imp H$ (i.e. $F$ implies $H$).
\end{proposition}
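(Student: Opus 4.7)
The plan is to unfold the definitions of equivalence and of $\prob{X}{F}$ in terms of full assignments and cofactors, and then chase a satisfying assignment of $F$ through the given equivalence.

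First, I would fix an arbitrary full assignment \pnt{r} to $\V{F} = X \cup Y$ and assume it satisfies $F$, i.e.\ $\cof{F}{r} = 1$; my goal is to show $\cof{H}{r} = 1$. Write \pnt{r} as the union of its $X$-part \pnt{x} and its $Y$-part \pnt{y}. Since \pnt{x} witnesses satisfiability of $F$ under the partial assignment \pnt{y}, the $Y$-assignment \pnt{y} satisfies the quantified formula \prob{X}{F}, that is, $\cof{(\prob{X}{F})}{y} = 1$.

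Next, I would apply the hypothesis $\prob{X}{F} \equiv H \wedge \prob{X}{F \setminus G}$. By Definition~\ref{def:formula-equiv}, equivalence means equality of cofactors on all full assignments to the union of the variable sets of the two sides. Since both sides have $Y$ as their free variables, any full assignment to $Y$ yields the same cofactor, so from $\cof{(\prob{X}{F})}{y} = 1$ I would conclude $\cof{(H \wedge \prob{X}{F \setminus G})}{y} = 1$, and in particular $\cof{H}{y} = 1$.

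Finally, because $H$ is a function of $Y$ alone (this is part of the PQE definition: $H(Y)$), the cofactor $\cof{H}{y}$ coincides with $\cof{H}{r}$ — the additional $X$-literals in \pnt{r} do not appear in $H$ and so leave its cofactor unchanged. Hence $\cof{H}{r} = 1$, which is exactly what $F \imp H$ asserts for the arbitrary satisfying assignment \pnt{r}. There is no real obstacle here; the only subtlety worth stating explicitly is that equivalence in Definition~\ref{def:formula-equiv} quantifies over full assignments to the combined variable set, so one needs to justify that restricting to the $Y$-part suffices when both sides are $Y$-formulas.
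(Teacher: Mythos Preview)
Your proof is correct. It differs from the paper's in style rather than substance: the paper argues algebraically at the level of formulas, while you argue semantically at the level of assignments. Concretely, the paper conjoins both sides of the hypothesis with $H$ to obtain $H \wedge \prob{X}{F} \equiv H \wedge \prob{X}{F \setminus G}$; since the right-hand side is exactly the original right-hand side, it equals $\prob{X}{F}$, giving $H \wedge \prob{X}{F} \equiv \prob{X}{F}$, whence $\prob{X}{F} \imp H$ and thus $F \imp H$. Your route instead fixes a satisfying assignment of $F$, projects to its $Y$-part, and pushes it through the equivalence to conclude that $H$ is satisfied. The paper's argument is shorter and avoids touching individual assignments; yours is more explicit about the semantics (in particular, you spell out the small point about Definition~\ref{def:formula-equiv} and why a full $Y$-assignment suffices), which is arguably clearer for a reader unfamiliar with the conventions. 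Either is perfectly adequate for this proposition.
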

The proofs of propositions are given in Appendix~\ref{app:proofs}.

\section{Verification Of A Solution to PQE}
\label{sec:ver_pqe}

Let $H(Y)$ be a solution found by a PQE solver when taking $G$ out of
\prob{X}{F(X,Y)}. That is $\prob{X}{F} \equiv H \wedge \prob{X}{F
  \setminus G}$ is supposed to hold. One can check if this is true
(i.e. whether $H$ is correct) using the proposition below.

\begin{proposition}
\label{prop:main}
Formula $H(Y)$ is a solution to the PQE problem of taking $G$ out
of the scope of quantifiers in \prob{X}{F(X,Y)} if and only if
  \begin{enumerate}
  \item[a)]  $H$ is implied by $F$;
  \item[b)]  $G$ is redundant in $H \wedge \prob{X}{F}$ i.e.
    $H \wedge \prob{X}{F} \equiv H \wedge \prob{X}{F \setminus G}$
 \end{enumerate}
\end{proposition}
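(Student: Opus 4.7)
The plan is to handle the two directions separately, using Proposition~\ref{prop:sol_impl} for the ``only if'' direction and a simple substitution argument for ``if''.

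For the ``only if'' direction, assume $H$ is a solution, i.e., $\prob{X}{F}\equiv H\wedge\prob{X}{F\setminus G}$. Part (a) is then immediate from Proposition~\ref{prop:sol_impl}. For part (b), I would conjoin $H$ to both sides of the solution equivalence, obtaining
\[
H \wedge \prob{X}{F} \;\equiv\; H \wedge H \wedge \prob{X}{F\setminus G} \;\equiv\; H \wedge \prob{X}{F\setminus G},
\]
which is exactly the statement that $G$ is redundant in $H\wedge\prob{X}{F}$.

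For the ``if'' direction, assume (a) and (b). The key observation is that $H$ depends only on $Y$, so from $F(X,Y)\imp H(Y)$ it follows that $\prob{X}{F}\imp H$: any full assignment \pnt{q} to $Y$ for which there exists an extension to $X$ satisfying $F$ already fixes $H(\pnt{q})=1$ because $F \imp H$ holds under that extension and $H$ does not depend on $X$. I would phrase this formally via Definition~\ref{def:formula-equiv} using cofactors: for every full assignment $\pnt{q}$ to $Y$, if $\cof{(\prob{X}{F})}{q}=1$ then some extension \pnt{q}' makes $\cof{F}{q'}=1$, and since $F\imp H$, we get $\cof{H}{q}=1$. Consequently, $\prob{X}{F} \equiv H \wedge \prob{X}{F}$, and then by (b),
\[
\prob{X}{F} \;\equiv\; H \wedge \prob{X}{F} \;\equiv\; H \wedge \prob{X}{F\setminus G},
\]
so $H$ is a solution.

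The only nontrivial step is the implication $F\imp H \Rightarrow \prob{X}{F}\imp H$, which relies on $H$ having no variables in $X$; everything else is straightforward conjunction manipulation. I expect this to be the place that needs the most care, and I would justify it explicitly via Definition~\ref{def:cofactor} and Definition~\ref{def:formula-equiv} rather than appealing to it informally.
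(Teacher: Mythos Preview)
Your proof is correct. The ``only if'' direction matches the paper's: part (a) via Proposition~\ref{prop:sol_impl}, and part (b) by conjoining $H$ to both sides (this is in fact exactly the manipulation the paper performs inside the proof of Proposition~\ref{prop:sol_impl}, though in the proof of Proposition~\ref{prop:main} itself the paper instead argues (b) by contradiction on a subspace~\pnt{y}).

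The ``if'' direction is where you genuinely diverge. The paper proceeds by contradiction: it assumes $\prob{X}{F}\not\equiv H\wedge\prob{X}{F\setminus G}$, fixes a full assignment \pnt{y} to $Y$ witnessing the failure, and separately rules out the two cases ($F$ satisfiable vs.\ unsatisfiable in subspace \pnt{y}) using (a) and (b) respectively. Your argument is a direct chain of equivalences: from (a) and the fact that $H$ has no $X$-variables you get $\prob{X}{F}\equiv H\wedge\prob{X}{F}$, and then (b) finishes it. Your route is shorter and makes the logical structure transparent; the paper's case analysis is more concrete and ties directly to the ``subspace'' viewpoint used later for boundary points. Either is fine, and the step you flag as needing care (that $F\imp H$ lifts to $\prob{X}{F}\imp H$ because $\V{H}\cap X=\emptyset$) is indeed the only place requiring a word of justification.
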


Checking the first condition of Proposition~\ref{prop:main} can be
done by a SAT-solver. Namely, one just needs to check for every clause
$C$ of $H$ if $F \wedge \overline{C}$ is unsatisfiable. (If so, then
$F \imp C$.) Below, we describe how one can check the second condition
of Proposition~\ref{prop:main} in terms of boundary points.

%
%
\begin{definition}
\label{def:bnd_pnt}
  Let $F$ be a formula and $G$ be a non-empty subset of clauses of
  $F$. A full assignment \pnt{p} to \V{F} is called a
  \bm{G}\tb{-boundary point} of $F$ if it falsifies $G$ and satisfies
  $F \setminus G$.
\end{definition}

The name ``boundary point'' is due to the fact that if the subset $G$
is small,\pnt{p} can sometimes be close to the boundary between
assignments satisfying and falsifying $F$.


%
%
\begin{definition}
  \label{def:two_kinds_bp}
  Let $F(X,Y)$ be a formula and $G$ be a non-empty subset of $F$. Let
  (\pnt{x},\pnt{y}) be a $G$-boundary point of $F$ where \pnt{x} and
  \pnt{y} are full assignments to $X$ and $Y$ respectively.  The
  $G$-boundary point (\pnt{x},\pnt{y}) is called \bm{Y}\tb{-removable}
  (respectively \bm{Y}\tb{-unremovable}) if formula \cof{F}{y} is
  unsatisfiable (respectively satisfiable).
\end{definition}

Recall that \cof{F}{y} describes the formula $F$ in subspace
\pnt{y}. So the fact that \cof{F}{y} is unsatisfiable (or satisfiable)
just means that $F$ is unsatisfiable (respectively satisfiable) in
subspace \pnt{y}.  We use the name ``$Y$-removable boundary point''
since such a boundary point can be eliminated by adding a clause
implied by $F$ that depends only on variables of $Y$. Indeed, suppose
that (\pnt{x},\pnt{y}) is a $Y$-removable $G$-boundary point. Then
\cof{F}{y} is unsatisfiable and hence there is a clause $C(Y)$
falsified by \pnt{y} and implied by $F$. Note that (\pnt{x},\pnt{y})
is \ti{not} a $G$-boundary point of $F \cup \s{C}$ because it
\ti{falsifies} the formula $(F \cup \s{C}) \setminus G$. So, adding
$C$ to $F$ eliminates the $G$-boundary point (\pnt{x},\pnt{y}).  On
the contrary, a $Y$-\ti{unremovable} boundary point (\pnt{x},\pnt{y})
\ti{cannot} be eliminated by adding a clause falsified by \pnt{y} and
implied by $F$.

%
%
\begin{proposition}
 \label{prop:form_red}
  Let $F(X,Y)$ be a formula. Let $G$ be a non-empty subset of clauses of $G$.
  The formula $G$ is redundant in \prob{X}{F} if and only if every
  $G$-boundary point of $F$ (if any) is $Y$-unremovable.
\end{proposition}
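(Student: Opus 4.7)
The plan is to unfold the definition of redundancy in terms of satisfiability over each $Y$-subspace and then check each direction by a short case analysis on whether $(\pnt{x},\pnt{y})$ satisfies or falsifies $G$.

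First I would observe that $G$ is redundant in $\prob{X}{F}$ iff, for every full assignment $\pnt{y}$ to $Y$, we have $\cof{F}{y}$ satisfiable iff $\cof{F\setminus G}{y}$ satisfiable (since $\exists X.\,F$ evaluated at $\pnt{y}$ is just the satisfiability of $\cof{F}{y}$, and two quantified formulas over $X$ are equivalent iff they agree on all $\pnt{y}$ by Definition~\ref{def:formula-equiv}). Because $F\setminus G \subseteq F$, the direction ``$\cof{F}{y}$ satisfiable $\Rightarrow \cof{F\setminus G}{y}$ satisfiable'' is automatic, so the whole content of redundancy is the converse implication for every $\pnt{y}$.

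For the ``only if'' direction, I would assume $G$ is redundant and pick an arbitrary $G$-boundary point $(\pnt{x},\pnt{y})$ of $F$. By Definition~\ref{def:bnd_pnt}, $(\pnt{x},\pnt{y})$ satisfies $F\setminus G$, so $\pnt{x}$ witnesses satisfiability of $\cof{F\setminus G}{y}$. Applying the redundancy hypothesis in subspace $\pnt{y}$ yields that $\cof{F}{y}$ is satisfiable, which by Definition~\ref{def:two_kinds_bp} is exactly $Y$-unremovability of $(\pnt{x},\pnt{y})$.

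For the ``if'' direction, I would assume every $G$-boundary point of $F$ is $Y$-unremovable and fix any $\pnt{y}$ such that $\cof{F\setminus G}{y}$ is satisfiable, witnessed by some $\pnt{x}$. Then I would split into two cases. If $(\pnt{x},\pnt{y})$ also satisfies $G$, it satisfies $F$ and so $\cof{F}{y}$ is satisfiable directly. Otherwise $(\pnt{x},\pnt{y})$ falsifies $G$ while satisfying $F\setminus G$, so it is a $G$-boundary point of $F$; by hypothesis it is $Y$-unremovable, meaning $\cof{F}{y}$ is satisfiable. Either way the nontrivial implication in the satisfiability reformulation holds, giving redundancy.

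The only real obstacle I anticipate is a definitional one: making sure that ``falsifies $G$'' in Definition~\ref{def:bnd_pnt} is read as ``falsifies at least one clause of $G$'' (i.e., makes the conjunction $G$ false), so that the case split in the reverse direction is exhaustive. Once that convention is fixed, both directions reduce to one-line applications of Definitions~\ref{def:bnd_pnt} and~\ref{def:two_kinds_bp}; no structural induction or construction of auxiliary formulas is required.
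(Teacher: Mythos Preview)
Your proposal is correct and follows essentially the same approach as the paper: both arguments unfold redundancy as agreement of satisfiability of $\cof{F}{y}$ and $\cof{(F\setminus G)}{y}$ over all $\pnt{y}$, and then appeal directly to Definitions~\ref{def:bnd_pnt} and~\ref{def:two_kinds_bp}. The only cosmetic difference is that the paper phrases both directions by contradiction (assume a $Y$-removable boundary point, respectively assume non-redundancy, and derive the opposite), whereas you give direct arguments and make the case split on whether the witness $(\pnt{x},\pnt{y})$ satisfies $G$ explicit; the logical content is the same.
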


So, to check the second condition of Proposition~\ref{prop:main} one
needs to show that every $G$-boundary point of $H \wedge F$ (if any)
is $Y$-unremovable.

\section{Description of \Vp}
\label{sec:ver_alg}
%
%
\setlength{\intextsep}{4pt}
\setlength{\textfloatsep}{4pt}
\begin{wrapfigure}{l}{2in}
\centering
\small
\vspace{-10pt}
\parbox{0cm}{\begin{tabbing}
aaa\=b\=cc\= dd\= \kill
$\Vp(\prob{X}{F},G,H)$ \{ \\
\scriptsize{1}\>  for every $C \in H$  \{ \\
\scriptsize{2}\Tt  $\pnt{p} := \mi{Sat}(F\!\wedge\!\overline{C})$\\
\scriptsize{3}\Tt  if ($\pnt{p} \neq \mi{nil}$)\\
\scriptsize{4}\ttt   return(\ti{false})\} \\
~~~~~------------\\
\scriptsize{5}\> for every $C \in G$ \{ \\
\scriptsize{6}\Tt $\mi{ok}\!:=\!\mi{ChkRed}(\prob{X}{F\!\wedge\!H},C)$ \\
\scriptsize{7}\Tt if ($\mi{ok}=\mi{false}$) return(\ti{false}) \\
\scriptsize{8}\Tt $F := F \setminus \s{C}$ \} \\
\scriptsize{9}\> return(\ti{true})  \\
\end{tabbing}}
\vspace{-20pt}
\caption{\Vp}
\vspace{-10pt}
\label{fig:ver_pqe}
\end{wrapfigure}

In this section, we describe the algorithm for verification of PQE
called \Vp.
%
%
\subsection{High-level view of \Vp}
A high-level view of \Vp is given in Fig.~\ref{fig:ver_pqe}. \Vp
accepts formula \prob{X}{F}, a subset $G \in F$ of clauses to take out
of the scope of quantifiers and a solution $H$ to this PQE
problem. That is \prob{X}{F} is supposed to be logically equivalent to
$H \wedge \prob{X}{F \setminus G}$. \Vp returns \ti{true} if this
equivalence holds and so, $H$ is a correct solution. Otherwise, \Vp
returns \ti{false}.

\Vp consists of two parts separated by a solid line. In the first part
(lines 1-4), \Vp just checks if $H$ is implied by $F$. This is done by
checking for every clause $C \in H$ if $F \wedge \overline{C}$ is
satisfiable. If so, $C$ is not implied by $F$ and the solution $H$ is
incorrect. Hence, \Vp returns \ti{false} (line 4). In the second part
(lines 5-9), for every clause $C \in G$, the algorithm checks if $C$
is redundant in \prob{X}{F \wedge H} by calling the function \Cr (line
6). Namely, \Cr checks if $\prob{X}{F \wedge H} \equiv \prob{X}{(F
  \setminus \s{C}) \wedge H}$. If so, $C$ is removed from $F$ (line
8). Otherwise, $C$ is not redundant in \prob{X}{F \wedge H} and \Vp
returns \ti{false} (line 7). If all clauses of $G$ can be removed from
\prob{X}{F \wedge H}, then $H$ is a correct solution and \Vp returns
\ti{true}.
%
%
\subsection{Description of \Cr}
%
%
\setlength{\intextsep}{4pt}
\setlength{\textfloatsep}{4pt}
\begin{wrapfigure}{l}{2in}
\centering
\small
\vspace{-5pt}
\parbox{0cm}{\begin{tabbing}
aaa\=b\=cc\= dd\= \kill
$\Cr(\prob{X}{F\wedge H},C)$ \{ \\
\scriptsize{1}\>  $Y := \V{F} \setminus X$ \\
\scriptsize{2}\> $\pl := \emptyset$  \\
\scriptsize{3}\> while (\ti{true}) \{ \\
\scriptsize{4}\Tt $F':=(F \setminus \s{C}) \wedge H$ \\
\scriptsize{5}\Tt $(\pnt{x},\pnt{y})\!:=\!\mi{Sat}(\pl\!\wedge F' \wedge\!\overline{C})$  \\
\scriptsize{6}\Tt if ($(\pnt{x},\pnt{y}) = \mi{nil}$) \\
\scriptsize{7}\ttt return(\ti{true}) \\
\scriptsize{8}\Tt $\pnt{x}^* := \mi{Sat}(\cof{F}{y} \wedge \cof{H}{y})$\\
\scriptsize{9}\Tt if ($\pnt{x}^* = \mi{nil}$)  \\
\scriptsize{10}\ttt  return(\ti{false}) \\
\scriptsize{11}\Tt  $D \!:=\!PlugCls(\pnt{y},\!\pnt{x}^*,\!F,\!H)$ \\
\scriptsize{12}\Tt $\pl := \pl \cup \s{D}$\}~~~\}\} \\
\end{tabbing}}
\vspace{-15pt}
\caption{\Cr}
\label{fig:chk_red}
\end{wrapfigure}
 The pseudocode of \Cr is shown in
Fig.~\ref{fig:chk_red}. \Cr accepts the formula \prob{X}{F(X,Y) \wedge
  H(Y)} and a quantified clause $C$ to be checked for redundancy. \Cr
returns \ti{true} if $C$ is redundant in \prob{X}{F \wedge
  H}. Otherwise, it returns false. To verify redundancy of $C$, \Cr
checks if $F \wedge H$ has a $Y$-removable $C$-boundary point. If not,
$C$ is redundant. Otherwise, $C$ is not redundant.

\Cr starts with computing the set $Y$ of unquantified variables (line
1). Then it initializes the set of ``plugging'' clauses (see below).
The main work is done in the while loop (lines 3-12). \Cr starts with
checking if formula $F \wedge H$ has a $C$-boundary point (lines 4-5)
i.e.  checking if there is an assignment (\pnt{x},\pnt{y}) satisfying
$\mi{Plg} \wedge (F \setminus \s{C}) \wedge H \wedge
\overline{C}$. The formula \ti{Plg} is used here to exclude the
$C$-boundary points examined in the previous iterations of the
loop. If no (\pnt{x},\pnt{y}) exists, the clause $C$ is redundant and
\Cr returns \ti{true} (line 7).

If the assignment (\pnt{x},\pnt{y}) above exists, \Cr checks if
formula $\cof{F}{y} \wedge \cof{H}{y}$ is satisfiable i.e. whether $F
\wedge H$ is satisfiable in subspace \pnt{y} (lines 8-10). If not, the
$C$-boundary point (\pnt{x},\pnt{y}) is $Y$-removable. This means that
$C$ is not redundant in \prob{X}{F \wedge H} and \Cr returns
\ti{false} (line 10). Otherwise, (\pnt{x},\pnt{y}) is a
$Y$-unremovable boundary point and \Cr calls the function \ti{PlugCls}
to build a plugging clause $D(Y)$. The latter is falsified by \pnt{y}
and so excludes re-examining $C$-boundary points in the subspace
\pnt{y}. After that, \Cr adds $D$ to the formula \ti{Plg} and starts a
new iteration of the loop.

The simplest way to build $D$ is to form the longest clause falsified
by \pnt{y}. One can try to make $D$ shorter to exclude a greater
subspace from future considerations. Suppose there is $\pnt{y}^*
\subset \pnt{y}$ such that the assignment $\pnt{x}^*$ satisfying
$\cof{F}{y} \wedge \cof{H}{y}$ (found in line 8) still satisfies
$F_{\vec{y}^*} \wedge H_{\vec{y}^*} $. This means that every
$C$-boundary point of the larger subspace $\pnt{y}^*$ is
$Y$-unremovable too. So, one can add to \ti{Plg} a shorter plugging
clause $D$ falsified by $\pnt{y}^*$ rather than \pnt{y}.

%
%

\subsection{Scalability issues}
\label{ssec:perf}
As we mentioned earlier, \Vp consists of two parts. The first part of
\Vp checks if every clause of the solution $H$ is implied by $F$. In
the second part, for every clause $C \in G$, the function \Cr checks
if $C$ is redundant in \prob{X}{F \wedge H}. (Recall that $G$ is the
subset of clauses that one must take out of \prob{X}{F}.) The first
part reduces to $|H|$ calls to a SAT-solver. So, it is as scalable as
SAT-solving (unless $H$ blows up as the size of the PQE problem
grows).  The second part of \Vp scales much poorer. The reason is that
this part requires enumeration of $Y$-unremovable $C$-boundary points
and the number of such points is typically grows exponentially. Besides,
the plugging clauses produced by \Cr are long.  So, adding a plugging
clause cannot exclude a big chunk of $C$-boundary points at once. So,
the size of formulas that can be efficiently handled by \Vp is limited
by 70-80 variables.

There is however \tb{an important case} where \Vp can efficiently
verify large formulas. This is the case where $F \wedge H$ does not
have any $Y$-unremovable $G$-boundary points. (Since $F$ and $F \wedge
H$ have the same $Y$-unremovable $G$-boundary points, this means that
$F$ has no such boundary points either.)  Then for every clause $C \in
G$, the \smallskip function \Cr immediately finds out that the formula
$(F \setminus \s{C}) \wedge H \wedge \overline{C}$ is unsatisfiable.
So, the verification of solution $H$ reduces to $|H| + |G|$
SAT-checks.

\section{Experimental Results}
\label{sec:expers}
In this section, we experimentally evaluate our implementation of \Vp.
In this implementation, we used Minisat~\cite{minisat} as an internal
SAT-solver. The source of \Vp and some examples can be downloaded
from~\cite{ver_pqe}. We conducted three experiments in which we
verified solutions obtained by the PQE algorithm called
\egp~\cite{eg_pqe_tech}. In the experiments we solved the PQE problem
of taking a clause $C$ out of formula \prob{X}{F(X,Y)} i.e.  finding a
formula $H(Y)$ such that $\prob{X}{F} \equiv H \wedge \prob{X}{F
  \setminus \s{C}}$. In Subsections~\ref{ssec:rmv_bps}
and~\ref{ssec:unrem_bps} we consider large formulas appearing in the
process of ``property generation''.  Namely, these formulas were
constructed when generating properties of circuits from the HWMCC-13
set as described in~\cite{eg_pqe_tech}. In
Subsection~\ref{ssec:rand_form}, we consider small random formulas. In
the experiments we used a computer with
Intel\textsuperscript{\textregistered} Core\textsuperscript{TM}
i5-10500 CPU @ 3.10GHz.

\subsection{Formulas where all boundary points are removable}
\label{ssec:rmv_bps}

In this subsection, we consider the PQE problems of taking $C$ out of
\prob{X}{F(X,Y)} where all $C$-boundary points of $F$ are
$Y$-removable.  In~\cite{eg_pqe_tech}, we generated 3,736 of such
formulas. In Table~\ref{tbl:rmv_bps}, we give a sample of 7 formulas.
The first column of the table gives the name of the circuit of the
HWMCC-13 set used to generate the PQE problem. (The real names of
circuits \ti{exmp1}, \ti{exmp2} and \ti{examp3} in the HWMCC-13 set
are \ti{mentorbm1}, \ti{bob12m08m}, and \ti{bob12m03m} respectively.)

%
%
\begin{wraptable}{l}{2.8in}
\centering
\scriptsize
\captionsetup{justification=centering}
\caption{\small{\Vp on formulas where all boundary points are removable}}
  \begin{tabular}{|p{25pt}|p{28pt}|p{25pt}|p{21pt}|p{17pt}|p{30pt}|p{29pt}|} \hline
 name & cla- & vari- &size & size &\tiny\ti{EG-PQE}$^+$& \tiny\ti{VerPQE}\\
 of  & uses  & ables & of set & of $H$ & run & run\\ 
 circ. & of $F$  & of $F$       &~~$Y$   &    &time\,(s)  & time\,(s)  \\ \hline
exmp1     & 64,365  &  26,998  & 4,376 &~1 &~~ 0.2  &~~ 0.03 \\ \hline
6s207   & 73,457  &  30,540  & 3,012 &~6  & ~~ 0.2  &~~ 0.04     \\ \hline
exmp2     & 84,009  &  32,147  & 1,994 &~1 & ~~ 0.1  &~~ 0.1  \\ \hline
exmp3     & 94,523  &  41,354  & 5,174 &~825 & ~~ 11   &~~ 0.4  \\ \hline
6s249   & 226,666 &  78,289  & 1,111 &~1 & ~~ 0.4  &~~ 0.1  \\ \hline
6s428   & 231,506 &  92,274  & 3,790 &~118 &~~  2.8  &~~ 0.2  \\ \hline
6s311   & 259,086 &  87,974  & 519   &~80 & ~~ 2.0  &~~ 0.1  \\ \hline
\end{tabular}                
\label{tbl:rmv_bps}
\end{wraptable}

The second and third co-lumns give the number of clauses and variables
of formula $F$. The fourth column shows the size of the set $Y$ i.e.
the number of unquantified variables in \prob{X}{F(X,Y)}. The next
column gives the number of clauses in the solution $H$ found by \egp.
The last two columns show the time taken by \egp and \Vp (in seconds)
to finish the PQE problem and verify the solution. As we mentioned in
Subsection~\ref{ssec:perf}, if all $C$-boundary points of $F$ are
$Y$-removable the same applies to formula $F \wedge H$. So, \Vp should
be very efficient even for large formulas.  Table~\ref{tbl:rmv_bps}
substantiates this intuition.

\subsection{Formulas with unremovable boundary points}
%
%
\vspace{5pt}
\begin{wraptable}{l}{2.8in}
\centering
\scriptsize
\captionsetup{justification=centering}
\caption{\small{\Vp on formulas with unremovable boundary points. The
time limit is 600 sec.}}
  \begin{tabular}{|p{25pt}|p{28pt}|p{25pt}|p{20pt}|p{17pt}|p{30pt}|p{29pt}|} \hline
 name    & cla-     & vari-  &size     & size &\tiny\ti{EG-PQE}$^+$ & \tiny\ti{VerPQE}\\
 of    & uses     & ables  & of set  & of $H$ & run  & run\\ 
 circ.   &  of $F$    & of $F$       &~$Y$    & & time\,(s) & time\,(s)  \\ \hline
 6s209   &  25,086  &  14,868  & 5,759 &~5 &~~ 0.1  & $>$600 \\ \hline 
 6s413   &  29,321  & 14,063   & 4,343 &~18 &~~ 0.2  & $>$600 \\ \hline
 6s276   &  35,810  & 17,631   & 3,201 &~11 &~~ 0.1  & $>$600 \\ \hline
 6s176   &  39,704  & 15,754   & 1,566 &~0 &~~ 0.9  & $>$600 \\ \hline
 6s207   &  73,457  & 30,540   & 3,012 &~20 &~~ 0.5  & $>$600 \\ \hline
 6s110   &  83,396  & 34,165   & 807   &~6 &~~ 0.2  & ~~0.1    \\ \hline
 6s275   &  109,328 & 49,130   & 3,196 &~2 &~~ 0.1  & $>$600 \\ \hline

\end{tabular}                
\label{tbl:unrem_bps}
\end{wraptable}

Here we consider the same PQE problems as in the previous
subsection. The only difference is that the formula $F$ contains
$C$-boundary points that are $Y$\ti{-unremovable}.
In~\cite{eg_pqe_tech}, we generated 3,094 of such formulas. In
Table~\ref{tbl:unrem_bps}, we give a sample of 7 formulas.  The name
and meaning of each column is the same as in Table~\ref{tbl:rmv_bps}.

\label{ssec:unrem_bps}

Table~\ref{tbl:unrem_bps} shows that \Vp failed to verify 6 out of 7
solutions in the time limit of 600 sec. whereas the corresponding
problems were easily solved by \egp. (The reason is that \egp uses a
more powerful technique of proving redundancy of $C$ than plugging
unremovable boundary points as \Vp does.) So, solutions $H$ obtained
for large formulas \prob{X}{F} where $F$ has a lot of unremovable
boundary points cannot be efficiently verified by \Vp.

\subsection{Random formulas}
\label{ssec:rand_form}
 In this subsection, we continue consider formulas with
 $Y$-unremovable $C$-boundary points. Only, in contrast to the
 previous subsection, here we consider small random formulas. In this
 experiment we verified solutions obtained for formulas whose number
 of variables ranged from 70 to 85. To get more reliable data, for
 each size we generated 100 random PQE problems and computed the
 average result. For each example, the formula $F$ had 20\% of
 two-literal and 80\% of three-literal clauses.

%
%
\begin{wraptable}{l}{2.5in}
\centering
\scriptsize
\captionsetup{justification=centering}
\caption{\small{\Vp on random formulas}}
\vspace{-5pt}
  \begin{tabular}{|p{25pt}|p{15pt}|p{17pt}|p{20pt}|p{18pt}|p{30pt}|p{29pt}|} \hline
 num- & cla- & vari- &size & size&\tiny\ti{EG-PQE}$^+$& \tiny\ti{VerPQE}\\
 ber of  &uses   & ables  & of set & of $H$  & run & run\\ 
 prob. & of $F$ &  of $F$      & ~$Y$  &     &time\,(s)  & time\,(s)  \\ \hline
 ~100 & 140     &~70  &~35 &~28 &~~ 0.01  &~~ 1.0  \\ \hline
  ~100 & 150     &~75   &~37 &~41 &~~ 0.01 &~~ 4.7  \\ \hline
   ~100 & 160     &~80  &~40  &~69 &~~ 0.03 &~~ 11.5  \\ \hline
    ~100 & 170     &~85   &~42 &~63 &~~ 0.03  &~~ 98.3   \\ \hline

\end{tabular}                
\vspace{5pt}
\label{tbl:rf}
\end{wraptable}

The results of this experiment are shown in Table~\ref{tbl:rf}. Let us
explain the meaning of each column of this table using its first line.
The first column indicates that we generated 100 PQE problems of the
same size shown in the next three columns. That is for all 100
problems corresponding to the first line of Table~\ref{tbl:rf} the
number of clauses, variables and the size of the set $Y$ was 140, 70
and 35 respectively. The last three columns of the first line show the
\ti{average} results over 100 examples. For instance, the first column
of the three says that the average size of the solution $H$ found by
\egp was 28 clauses. Table~\ref{tbl:rf} shows that the performance of
\Vp drastically drops as the number of variables grows due to the
exponential blow-up of the set of $Y$-unremovable $C$-boundary points.

\section{Some Background}
\label{sec:bkgr}
In this section, we give some background on boundary points.  The
notion of a boundary point with respect to a variable was introduced
in~\cite{esspnts}. (At the time it was called an \ti{essential}
point).  Given a formula $F(X)$, a boundary point with respect to a
variable $x \in X$ is a full assignment \pnt{p} to $X$ such that each
clause falsified by \pnt{p} contains $x$. Later we showed a relation
between a resolution proof and boundary points~\cite{sat09}. Namely,
it was shown that if $F$ is unsatisfiable and contains a boundary
point with respect to a variable $x$, any resolution proof that $F$ is
unsatisfiable has to contain a resolution on $x$. In~\cite{hvc-10}, we
presented an algorithm that performs SAT-solving via boundary point
elimination.

In~\cite{tech_rep_edpll,fmsd14}, we introduced the notion of a
boundary point with respect to a subset of variables rather than a
single variable. Using this notion we formulated a QE algorithm that
builds a solution by eliminating removable boundary
points. In~\cite{eg_pqe_tech}, we formulated two PQE algorithms called
\Eg and \egp.  The algorithm \Eg is quite similar to \Vp and
implicitly employs the notion of a boundary point we introduced here
i.e. the notion formulated with respect to a subset of \ti{clauses}
rather than variables. In this report, when describing \Vp we use this
notion of a boundary point \ti{explicitly}.

\section{Conclusions}
\label{sec:concl}
We present an algorithm called \Vp for verifying a solution to Partial
Quantifier Elimination (PQE). The advantage of \Vp is that it does
need to know how this solution was obtained (e.g. if a particular
proof system was employed).  So, \Vp can be used to debug an \ti{any}
PQE algorithm.  A flaw of \Vp is that its performance strongly depends
on the presence of so-called unremovable boundary points of the
formula at hand. If this formula has no such points, \Vp can
efficiently verify solutions to very large PQE problems. Otherwise,
its performance is, in general, limited to small problems of 70-80
variables.

\bibliographystyle{IEEEtran}
\bibliography{short_sat,local,l1ocal_hvc}
\vspace{20pt}
\appendix
\noindent{\large \tb{Appendix}}
\section{Proofs Of Propositions}
\label{app:proofs}
\setcounter{proposition}{0}
%
%
\begin{proposition}
Let $H$ be a solution to the PQE problem of
Definition~\ref{def:pqe_prob}.  That is $\prob{X}{F}\equiv
H\wedge\prob{X}{F \setminus G}$. Then $F \imp H$ (i.e. $F$ implies
$H$).
\end{proposition}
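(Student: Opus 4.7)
The plan is to prove the contrapositive in spirit: take an arbitrary full assignment that satisfies $F$ and show it must also satisfy $H$. Since $H$ depends only on the variables $Y$, it suffices to pick out the $Y$-part of such an assignment, use it as a witness for $\exists X. F$, then invoke the equivalence assumed for $H$ to conclude $H$ is satisfied on the $Y$-part, and finally lift this back to the full assignment.

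More concretely, let $(\pnt{x},\pnt{y})$ be an arbitrary full assignment to $\V{F} = X \cup Y$ that satisfies $F$. First I would observe that $\pnt{y}$ is then a full assignment to $\V{\prob{X}{F}} = Y$ and that $\pnt{x}$ serves as an explicit $X$-witness showing $\cof{(\prob{X}{F})}{y}$ evaluates to $1$. Next, by the hypothesis that $H$ is a solution, $\prob{X}{F} \equiv H \wedge \prob{X}{F \setminus G}$, and so by Definition~\ref{def:formula-equiv} the two sides agree on every full assignment to the union of their free variables; in particular they agree on $\pnt{y}$. Hence the right-hand side is also $1$ under $\pnt{y}$, which forces each conjunct to be $1$ under $\pnt{y}$, and in particular $\cof{H}{y} = 1$.

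Finally, because $H$ is a formula over $Y$ only, the fact that $\pnt{y}$ satisfies $H$ implies that the full assignment $(\pnt{x},\pnt{y})$ to $X \cup Y$ also satisfies $H$. Since $(\pnt{x},\pnt{y})$ was an arbitrary satisfying assignment of $F$, we conclude $F \imp H$.

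I do not foresee a real obstacle here; the whole statement is essentially a semantic unfolding of Definitions~\ref{def:formula-equiv} and~\ref{def:pqe_prob}. The only point that requires a moment of care is the distinction between the full assignment $(\pnt{x},\pnt{y})$ to $\V{F}$ and the restricted assignment $\pnt{y}$ to $\V{\prob{X}{F}}$; one must verify that Definition~\ref{def:formula-equiv} can be legitimately applied to $\pnt{y}$ (viewed as a full assignment to $\V{\prob{X}{F}} \cup \V{H \wedge \prob{X}{F \setminus G}} = Y$) and that the passage back from $\pnt{y}$ satisfying $H$ to $(\pnt{x},\pnt{y})$ satisfying $H$ is sound because $\V{H} \subseteq Y$.
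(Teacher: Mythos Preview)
Your proof is correct. It is a direct semantic argument: pick a satisfying assignment of $F$, project to $Y$, use the equivalence pointwise via Definition~\ref{def:formula-equiv}, and read off that $H$ is satisfied. The paper's proof takes a more algebraic route: it conjoins both sides of $\prob{X}{F}\equiv H\wedge\prob{X}{F\setminus G}$ with $H$, observes that the right-hand side is unchanged (since $H\wedge H = H$), and thus obtains $H\wedge\prob{X}{F}\equiv\prob{X}{F}$, from which $\prob{X}{F}\imp H$ and hence $F\imp H$ follow immediately. Your approach has the advantage of being a transparent unfolding of the definitions with no trick required; the paper's approach is shorter and avoids ever naming an assignment, but the step ``conjoin both sides with $H$ and notice the right side is idempotent'' is the kind of move one has to spot. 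Either argument is perfectly adequate for this elementary proposition.
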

%
%
\begin{proof}
By conjoining both sides of the equality with $H$ one concludes
that\linebreak $H \wedge \prob{X}{F} \equiv
H\wedge\prob{X}{F \setminus G}$, which entails
$H \wedge \prob{X}{F} \equiv \prob{X}{F}$. Then\linebreak $\prob{X}{F} \imp H$
and thus $F \imp H$.
\end{proof}
%
%
\begin{proposition}
Formula $H(Y)$ is a solution to the PQE problem of taking $G$ out
of the scope of quantifiers in \prob{X}{F(X,Y)} if and only if
  \begin{enumerate}
  \item[a)]  $H$ is implied by $F$;
  \item[b)]  $G$ is redundant in $H \wedge \prob{X}{F}$ i.e.
    $H \wedge \prob{X}{F} \equiv H \wedge \prob{X}{F \setminus G}$
 \end{enumerate}
\end{proposition}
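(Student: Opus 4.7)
The plan is to prove both directions separately, exploiting the fact that $H$ depends only on $Y$ (so $H$ commutes with $\exists X$) together with the already-established Proposition~1.

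For the forward direction, assume $\prob{X}{F}\equiv H\wedge\prob{X}{F\setminus G}$. Condition (a) is immediate from Proposition~1. For condition (b), I would conjoin both sides of the assumed equivalence with $H$, obtaining $H\wedge\prob{X}{F}\equiv H\wedge H\wedge\prob{X}{F\setminus G}$, and then use the idempotence $H\wedge H\equiv H$ to conclude $H\wedge\prob{X}{F}\equiv H\wedge\prob{X}{F\setminus G}$, which is exactly (b).

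For the reverse direction, assume (a) and (b). The key intermediate step is to show $\prob{X}{F}\equiv H\wedge\prob{X}{F}$. Since $F\imp H$ by (a) and $H=H(Y)$ contains no variables of $X$, standard manipulation of existential quantifiers gives $\prob{X}{F}\imp H$ (on any full assignment \pnt{y} to $Y$ witnessing $\prob{X}{F}$, evaluating $H$ reduces to checking $H$ at \pnt{y}, which is forced by any $X$-extension satisfying $F$). Once $\prob{X}{F}\imp H$ is established, we get $\prob{X}{F}\equiv H\wedge\prob{X}{F}$. Chaining this with (b) yields $\prob{X}{F}\equiv H\wedge\prob{X}{F\setminus G}$, which is the defining equivalence of the PQE solution.

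The only step requiring any care is verifying $\prob{X}{F}\imp H$ from $F\imp H$ when $H$ is $X$-free; I would simply appeal to Definition~\ref{def:formula-equiv} and check that for any full assignment \pnt{y} to $Y$, if $\cof{(\prob{X}{F})}{y}=1$ then some $X$-extension satisfies $\cof{F}{y}$, hence satisfies $\cof{H}{y}$, and since $H$ does not depend on $X$ this forces $\cof{H}{y}=1$. No real obstacle arises; the proposition is essentially a rearrangement of the definition of a PQE solution into the two verifiable conditions used by \Vp.
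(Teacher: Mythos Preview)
Your proof is correct. The forward direction matches the paper closely: condition~(a) via Proposition~\ref{prop:sol_impl}, and condition~(b) by conjoining both sides with $H$ (the paper in fact uses this exact conjunction step inside the proof of Proposition~\ref{prop:sol_impl}, though for Proposition~\ref{prop:main} itself it re-derives~(b) by a short contradiction argument instead).

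The reverse direction is where your route genuinely differs. The paper argues by contradiction and case analysis on subspaces~\pnt{y}: assuming $\prob{X}{F}\not\equiv H\wedge\prob{X}{F\setminus G}$, it splits into the case where $F$ is satisfiable but $H\wedge(F\setminus G)$ is not (contradicting~(a)) and the case where $F$ is unsatisfiable but $H\wedge(F\setminus G)$ is satisfiable (contradicting~(b)). You instead proceed algebraically: from~(a) and the fact that $H$ is $X$-free you extract $\prob{X}{F}\imp H$, hence $\prob{X}{F}\equiv H\wedge\prob{X}{F}$, and then chain with~(b). Your argument is shorter and avoids the case split; the paper's version is more explicitly semantic and perhaps makes clearer which condition fails when the equivalence breaks. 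Both are entirely valid.
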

%
%
\begin{proof}\noindent\tb{The if part.} Given the two conditions above,
one needs to prove that $\prob{X}{F} \equiv
H \wedge \prob{X}{F \setminus G}$.  Assume the contrary i.e.
$\prob{X}{F} \not\equiv H \wedge \prob{X}{F \setminus G}$.  Consider
the two possible cases. The first case is that there exists a full
assignment \pnt{y} to $Y$ such that $F$ is satisfiable in
subspace \pnt{y} whereas $H \wedge (F \setminus G)$ is unsatisfiable
in this subspace. Since $F \setminus G$ is satisfiable in the
subspace \pnt{y}, $H$ is unsatisfiable in this subspace.  So, $F$ does
not imply $H$ and we have a contradiction.

The second case is that $F$ is unsatisfiable in the subspace \pnt{y}
whereas\linebreak $H \wedge (F \setminus G)$ is satisfiable
there. Then $H \wedge F$ is unsatisfiable in subspace \pnt{y} too.
So, $H \wedge \prob{X}{F} \neq H \wedge \prob{X}{F \setminus G}$ in
subspace \pnt{y} and hence $G$ is not redundant in
$H \wedge \prob{X}{F}$. So, we have a contradiction again.

\vspace{5pt}
\noindent\tb{The only if part}. Given
$\prob{X}{F} \equiv H \wedge \prob{X}{F \setminus G}$, one needs to
prove the two conditions above. The first condition (that $H$ is
implied by $F$) follows from Proposition~\ref{prop:sol_impl}.  Now
assume that the second condition (that $G$ is redundant in
$H \wedge \prob{X}{F}$) does not hold. That is
$H \wedge \prob{X}{F} \not\equiv H \wedge \prob{X}{F \setminus G}$.
Note that if $H \wedge F$ is satisfiable in a subspace \pnt{y}, then
$H \wedge (F \setminus G)$ is satisfiable too. So, the only case to
consider here is that $H \wedge F$ is unsatisfiable in a
subspace \pnt{y} whereas $H \wedge (F \setminus G)$ is satisfiable
there.  This means that $F$ is unsatisfiable in the
subspace \pnt{y}. Then $\prob{X}{F} \neq H \wedge \prob{X}{F \setminus
G}$ in this subspace and we have a contradiction.
\end{proof}
%
%
\begin{proposition}
  Let $F(X,Y)$ be a formula. Let $G$ be a non-empty subset of clauses of $G$.
  The formula $G$ is redundant in \prob{X}{F} if and only if every
  $G$-boundary point of $F$ (if any) is $Y$-unremovable.
\end{proposition}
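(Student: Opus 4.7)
I plan to prove the biconditional by contraposition in both directions, exploiting that $\exists X[F] \Rightarrow \exists X[F\setminus G]$ holds automatically because $F$ entails $F\setminus G$. Hence redundancy of $G$ in $\exists X[F]$ is equivalent to the reverse implication: for every full assignment $\vec{y}$ to $Y$, if $(F\setminus G)_{\vec{y}}$ is satisfiable then $F_{\vec{y}}$ is satisfiable. The whole argument then reduces to showing that this per-$\vec{y}$ condition matches the nonexistence of $Y$-removable $G$-boundary points, by unpacking definitions in a single subspace $\vec{y}$.

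For the \emph{only if} direction, I would assume $G$ is redundant and, for contradiction, suppose $(\vec{x},\vec{y})$ is a $Y$-removable $G$-boundary point of $F$. By definition $(\vec{x},\vec{y})$ satisfies $F\setminus G$, so $(F\setminus G)_{\vec{y}}$ is satisfiable; by $Y$-removability, $F_{\vec{y}}$ is unsatisfiable. This directly contradicts the per-$\vec{y}$ reformulation of redundancy above.

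For the \emph{if} direction, I would assume no $Y$-removable $G$-boundary point exists and establish that reformulation. Given any $\vec{y}$ and any $\vec{x}$ satisfying $(F\setminus G)_{\vec{y}}$, I need $F_{\vec{y}}$ to be satisfiable. If it were not, then the full assignment $(\vec{x},\vec{y})$ satisfies $F\setminus G$ but not $F$, and since $F=(F\setminus G)\wedge G$ it must falsify at least one clause of $G$; hence $(\vec{x},\vec{y})$ is a $G$-boundary point. Unsatisfiability of $F_{\vec{y}}$ would then make it $Y$-removable, contradicting the assumption.

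The only delicate bookkeeping is the last implication: an assignment satisfying $F\setminus G$ but not $F$ necessarily falsifies $G$, which is immediate from $F=(F\setminus G)\wedge G$. I do not anticipate any real obstacle, since everything reduces to restricting attention to a single subspace $\vec{y}$ and unpacking the definitions of cofactor, $G$-boundary point, and $Y$-(un)removability.
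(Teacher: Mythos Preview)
Your proposal is correct and follows essentially the same approach as the paper's proof: both directions are argued by contradiction, identifying (in the failure case) a full assignment $(\vec{x},\vec{y})$ that satisfies $F\setminus G$ while $F_{\vec{y}}$ is unsatisfiable, and recognizing it as a $Y$-removable $G$-boundary point. Your explicit preliminary reduction (that redundancy amounts to the single implication $\exists X[F\setminus G]\Rightarrow \exists X[F]$) is stated implicitly in the paper, and your justification that such an assignment must falsify $G$ is slightly more detailed than the paper's, but the overall structure and key steps coincide.
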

%
%
\begin{proof}
\noindent\tb{The if part.} Given that every $G$-boundary point of $F$ 
is $Y$-unremovable, one needs to show that $G$ is redundant
in \prob{X}{F} i.e.  $\prob{X}{F} \equiv \prob{X}{F \setminus
G}$. Assume that this is not true.  Then there is a full
assignment \pnt{y} to $Y$ such that $F$ is unsatisfiable in
subspace \pnt{y} whereas $F \setminus G$ is satisfiable there. This
means that there is an assignment (\pnt{x},\pnt{y}) falsifying $F$ and
satisfying $F \setminus G$. Since this assignment falsifies $G$, it is
a $G$-boundary point. This boundary point is $Y$-\ti{removable},
because $F$ is unsatisfiable in subspace \pnt{y}. So, we have a
contradiction.

\vspace{5pt}
\noindent\tb{The only if part}. Given that $G$ is redundant
in \prob{X}{F}, one needs to show that every $G$-boundary point of $F$
is $Y$-unremovable. Assume the contrary i.e. there is a $Y$-removable
$G$-boundary point of $F$. This means that there is an assignment
(\pnt{x},\pnt{y}) falsifying $G$ and satisfying $F \setminus G$ such
that $F$ is unsatisfiable in subspace \pnt{y}.  Then
$\prob{X}{F} \neq \prob{X}{F \setminus G}$ in subspace \pnt{y} and so,
we have a contradiction.
\end{proof}

\end{document}